\expandafter\ifx\csname XeTeXversion\endcsname\relax\pdfoutput=1\fi
\documentclass[10pt]{article}

\usepackage[T1]{fontenc}
\usepackage[margin=1in]{geometry}
\usepackage{amsmath}
\usepackage{amssymb}
\usepackage{booktabs}
\usepackage{graphicx}
\usepackage{caption}
\usepackage{hyperref}
\usepackage{natbib}
\usepackage{enumitem}
\usepackage{xcolor}
\usepackage{tikz}
\usepackage{amsthm}
\usepackage{array}
\usepackage{url}
\usetikzlibrary{arrows.meta,positioning}
\hypersetup{
  colorlinks=true,
  linkcolor=black,
  citecolor=blue,
  urlcolor=blue,
  pdftitle={STC: Reversible Digit-Context Decomposition for BWT-Family Text Compression},
  pdfauthor={Jingyang Du, Yang Shen, Anling Xiang},
  pdfsubject={A new lossless text-compression algorithm found by the authors through the self-evolving AI system zeelin},
  pdfkeywords={lossless text compression, Burrows-Wheeler transform, reversible preprocessing, digit-context decomposition, enwik9, reproducible compression, self-evolving AI, zeelin, algorithm discovery}
}
\makeatletter
\@ifundefined{doi}
  {\newcommand{\doi}[1]{\href{https://doi.org/#1}{\nolinkurl{https://doi.org/#1}}}}
  {\renewcommand{\doi}[1]{\href{https://doi.org/#1}{\nolinkurl{https://doi.org/#1}}}}
\makeatother
\setcitestyle{authoryear,round,aysep={,},yysep={;}}
\setlist[itemize]{leftmargin=*, itemsep=2pt, topsep=3pt}
\graphicspath{{figures/}}

\newtheorem{proposition}{Proposition}

\title{STC: Reversible Digit-Context Decomposition for\\
BWT-Family Text Compression}
\author{
Jingyang Du\textsuperscript{1},
Yang Shen\textsuperscript{2,3},
Anling Xiang\textsuperscript{1}\\
\textsuperscript{1}School of Journalism and Communication, Minzu University of China, Beijing, China\\
\textsuperscript{2}School of Journalism and Communication, Tsinghua University, Beijing, China\\
\textsuperscript{3}College of AI, Tsinghua University, Beijing, China\\
\small \textsuperscript{*}Corresponding authors: Anling Xiang
(\href{mailto:anlingxiang@muc.edu.cn}{anlingxiang@muc.edu.cn});
Yang Shen
(\href{mailto:124739259@qq.com}{124739259@qq.com})\\
\small ORCID: Anling Xiang \href{https://orcid.org/0000-0003-1690-1586}{0000-0003-1690-1586};
Yang Shen \href{https://orcid.org/0000-0003-4814-9018}{0000-0003-4814-9018}
}
\date{\today}

\begin{document}
\maketitle

\begin{abstract}
Burrows--Wheeler-transform-based compressors rely on local context regularity,
but structured text also contains dates, counters, identifiers, coordinates,
and other digit runs whose values vary differently from their surrounding
tokens.  STC is presented as a new algorithm found by the authors through
the self-evolving AI system zeelin.  It is a practical BWT-family compressor
that separates this source
of variation before the component BWT stage.  It replaces digit runs in the
main stream with an unambiguous placeholder and stores the removed digits in
length- and context-conditioned side streams.  The side streams use stable
bucket ordering and compact digit packing, so the decoder can reconstruct the
original run order from the normalized main stream without storing a separate
permutation.  The resulting components are encoded by a fixed internal
BWT/M03-style component coder.  On enwik9, STC produces a 157,388,188-byte
archive with a 183,174-byte decoder source package, giving a local
LTCB-style total of 157,571,362 bytes.  A full-enwik9 same-coder ablation
shows that the digit-context decomposition reduces the archive by 2,629,561
bytes relative to the no-split control.  The result is
locally verified by full decode and SHA-256 matching;
official benchmark status requires independent maintainer-side verification.
\end{abstract}

\noindent\textbf{Keywords:} lossless text compression; Burrows--Wheeler transform; reversible preprocessing; digit-context decomposition; enwik9; reproducible compression; self-evolving AI; zeelin; algorithm discovery

\noindent\textbf{Code and data:} \url{https://github.com/thu-nmrc/STC-for-BWT-FamilyText-Compression}

\noindent\textbf{Algorithm provenance:} STC is presented as a new algorithm
found by the authors through the self-evolving AI system zeelin, which was used
to explore and refine candidate reversible compression transforms.

\section{Introduction}

The Burrows--Wheeler transform (BWT) remains one of the most useful ideas in
practical lossless text compression.  The original block-sorting algorithm
\citep{burrows1994} showed that a reversible permutation
can place symbols with similar following contexts close to one another,
making the transformed stream more suitable for move-to-front coding,
run-length coding, and entropy coding.  Manzini's analysis later connected
BWT-based compression to empirical entropy bounds \citep{manzini2001}, while
Fenwick, Trinca, and others studied the engineering choices that make
BWT-based compressors competitive in practice \citep{fenwick1996,trinca2005,
adjeroh2008}.  The transform itself therefore sits in a broader compression
pipeline: preprocessing, block construction, post-transform coding,
component modeling, and reproducible byte accounting all affect the final
archive.

The BWT pipeline is especially attractive for text because textual context is
repetitive but not perfectly stationary.  Practical designs often try to make
the symbols presented to the BWT or to the post-BWT stages more regular.
Post-BWT clustering and interpolative coding provide a representative example:
they identify an information loss in the post-transform path and reorganize the
derived streams so that the following coder sees stronger clusters
\citep{niemi2020}.  STC follows the same
practical spirit, but it acts before the component BWT stage and targets a
different source of irregularity: digit runs embedded in structured text.

Structured text such as Wikipedia XML contains years, page identifiers,
counters, dates, coordinates, version numbers, table values, and other
numeric fragments.  These fragments must be reconstructed exactly, yet their
digit values often vary according to rules that differ from the prose or
markup around them.  At byte granularity, a date or identifier can interrupt
otherwise similar textual neighborhoods.  Keeping those digits in the main
stream asks one BWT component to model both the surrounding textual context
and the changing payload inside the number.  This is not always the most
useful organization of the data.

STC studies a reversible separation of these two phenomena.  The algorithmic
idea was found by the authors through the self-evolving AI system zeelin and
then specified, implemented, and evaluated as the byte-level transform reported
in this paper.  The main stream
keeps the textual skeleton and replaces each digit byte by a placeholder.  The
removed digits are stored in side streams selected by run length and local
context.  Because the decoder sees the same placeholder-normalized main
stream, it can rediscover every digit-run position and length.  Stable sorting
then orders side-stream records by keys derived from that main stream, so no
explicit permutation table is stored.  Compact raw, pair, and integer
big-endian encodings represent the digit payloads, and all resulting
components are compressed by a fixed internal BWT/M03-style component coder.

The paper's emphasis is the transform, not a new theorem about the BWT.  The
central empirical question is whether this digit-context decomposition helps
when the backend component coder is held fixed.  On full enwik9, the
same-coder ablation answers that question directly: the complete STC
decomposition produces a 157,388,188-byte archive, improving by 2,629,561
bytes over the no-split control.  With the 183,174-byte decoder source
package counted, the local LTCB-style total is 157,571,362 bytes.  The result
is locally verified by full decode and SHA-256 equality; official leaderboard
status remains a separate maintainer-side process.

The main contributions are:

\begin{enumerate}
  \item We present STC as a new algorithm found by the authors through the
  self-evolving AI system zeelin and introduce a reversible digit-context
  decomposition for
  BWT-family text compression, separating digit runs from the main stream
  while preserving exact reconstruction.
  \item We design a bucketed side-stream representation that conditions digit
  runs on local context, run length, stable ordering, and compact digit
  packing.
  \item We evaluate the transform using a same-coder full-enwik9 ablation,
  showing a 2,629,561-byte archive reduction relative to the no-split control.
  \item We provide a reproducible local LTCB-style evaluation with exact archive
  accounting, decoder-package accounting, SHA-256 verification, and
  reproducible build/decode protocol.
\end{enumerate}

\section{Related Work}

\subsection{BWT-based compression}

BWT-based compression began with the block-sorting lossless data compression
algorithm \citep{burrows1994}.  The classic pipeline
uses a reversible block sort followed by local-rank or move-to-front coding,
run-length coding, and a statistical coder.  Fenwick analyzed principles and
implementation improvements for block-sorting text compression
\citep{fenwick1996}, while Manzini gave a formal analysis connecting BWT-based
compression with empirical entropy \citep{manzini2001}.  Broader references on
lossless compression and BWT-based indexing place these methods in the larger
family of dictionary, statistical, and transform compressors
\citep{bell1990,salomon2007,sayood2017,adjeroh2008}.
The backend-coding context also connects to classical entropy and dictionary
coding traditions, including Huffman coding, arithmetic coding,
Lempel--Ziv and LZW dictionary coding, PPM-style adaptive modeling, and
ANS-style coders
\citep{huffman1952,rissanen1976,witten1987,ziv1977,ziv1978,welch1984,
cleary1984,duda2013}.

Practical BWT compressors show that compression ratio depends on much more
than the transform definition.  bzip2 popularized a robust BWT/RLE/MTF/Huffman
pipeline \citep{seward1996}; bsc-m03 and libbsc provide a useful engineering
context for suffix sorting, BWT variants, and post-transform modeling
\citep{bscm03,grebnovlibbsc}.  High-performance BWT encoders and post-BWT
stream reorganization have also been studied as practical engineering problems
\citep{trinca2005,niemi2020}, showing how a practical paper can combine a
reversible stream reorganization, coder design, and empirical comparison.
Recent work on BWT construction for repetitive text and length-diverse
biological string collections further emphasizes that practical BWT-family
compression depends on construction strategy, memory behavior, and corpus
structure rather than on the transform definition alone
\citep{diazdominguez2023,adler2025}.
STC differs from these post-BWT designs in where it intervenes: it changes the
component streams before the BWT-family coder sees them, separating digit runs
from the textual neighborhoods that locate them.

\subsection{BWT variants and compressiveness}

Several lines of work separate structural properties of the BWT from details
of sentinels, rotations, or construction algorithms.  Bijective and extended
BWT variants avoid or reinterpret sentinel handling and support alternative
cyclic-string viewpoints \citep{kufleitner2009,gil2012,mantaci2007,
bannai2021,koppl2020}.  Other work studies BWT-run-bounded indexes,
bijective compression schemes, compressiveness, and searchable string
transformations \citep{gagie2020,badkobeh2024,bannai2025,giancarlo2019}.
Weighted BWT compression explores
local skew in BWT output \citep{fruchtman2023}.  These works clarify what BWT
variants can guarantee or exploit at the transform level.  Recent analyses also
show that run counts can be sensitive to bit-level changes and decomposition
choices in BWT/eBWT settings \citep{giuliani2025,ingels2025}.  STC uses a more
engineering-oriented route: the BWT-family component coder is retained, while
the byte streams supplied to it are reorganized by a reversible
digit-context transform.  The result is therefore a practical compression
construction rather than a claim about BWT compressiveness in general.

\subsection{Reversible preprocessing for compression}

Compression systems often improve a backend coder by applying reversible
preprocessing.  Dictionary transforms, word-oriented compressed-document
systems, text filters, and XML preprocessors can regularize the stream before
entropy coding
\citep{skibinski2005,wan2003,liefke2000,tolani2002,buneman2003}.  Columnar and
semi-structured compressors exploit the fact that fields with similar roles
often compress better when separated \citep{abadi2006,stonebraker2005,
melnik2010}.  The common design principle is to expose regularity that the
backend coder would otherwise have to discover indirectly.

STC follows this broad pattern but targets byte-level digit runs embedded in
ordinary text.  It does not require XML parsing, a schema, token dictionaries,
or typed fields.  The transform is applied before BWT, is deterministic from
the normalized main stream, and does not store a separate permutation.  This
distinguishes it from general text filters and XML compressors whose auxiliary
streams often require explicit field, token, or structural ordering metadata.

\subsection{Numeric and structured text compression}

Integer and numeric compression has a large literature in databases,
information retrieval, time-series storage, and column stores.  Delta coding,
frame-of-reference, patched frame-of-reference, variable-byte coding,
SIMD-friendly integer packing, time-series value compression, and
compressed document indexes exploit regularities in numeric streams or posting
sequences \citep{goldstein1998,zukowski2006,
lemire2015,lemire2018,pelkonen2015,witten1999}.  These techniques
usually operate on already identified numeric sequences or typed columns.
Their setting is useful background because it shows that numeric payloads
often benefit from representation choices different from ordinary text.

STC does not assume typed numeric columns or a table layout.  It treats digit
runs as context-dependent byte-level structures inside ordinary text: the
surrounding text remains in a BWT main stream, while the digits are removed
into side streams whose order is recoverable from context.  This places STC
between text preprocessing and numeric compression: it identifies numeric
fragments syntactically, but it uses their textual boundary context to keep
the transform reversible without structural metadata.

\subsection{Reproducible compression benchmarks}

Compression claims are sensitive to what bytes are counted.  The Large Text
Compression Benchmark (LTCB) scores compressed enwik9 bytes plus decompressor
source-package bytes and requires public decompression \citep{ltcb}.  The
Hutter Prize similarly emphasizes reproducible decompression of a fixed
Wikipedia-derived corpus \citep{hutter}.  Reproducible-build practice,
benchmark rules, and cryptographic hashes provide useful discipline for
such claims \citep{reprobuilds,nistsha}.  STC adopts
LTCB-style local accounting: archive bytes, decoder/source-package bytes,
and SHA-256 equality are reported explicitly.  The paper
keeps this accounting separate from official leaderboard status so that the
method result, local verification evidence, and external benchmark process are not
conflated.

\section{Data and Methodology}

\subsection{Problem setting}

Let $x \in \{0,\ldots,255\}^n$ be the input byte string.  The goal is a
lossless compressor that writes an archive $A(x)$ and a source-contained
decoder such that decoding reconstructs exactly $x$.  For the enwik9 evaluation,
$n=1{,}000{,}000{,}000$.  The local score reported in this paper is
\[
  |\mathrm{archive}| + |\mathrm{decoder\ source\ package}|.
\]
This score is local and LTCB-style; official status requires maintainer-side
acceptance.

The transform studied here is independent of any semantic parse of the input.
It sees only bytes and the ASCII digit set.  This restriction is deliberate:
the encoder should be able to process plain text, markup, source-code-like
fragments, and other structured text without relying on a grammar or schema
that would have to be transmitted to the decoder.  The only structure STC
extracts is the position, length, and normalized byte context of maximal digit
runs.

\subsection{Design intuition}

BWT coding benefits when nearby suffix contexts induce skewed local symbol
distributions.  Digit runs in structured text are frequent, but the digit
values themselves often vary in a way that is not well aligned with the
neighboring textual context.  STC separates the variable digit payload from
the main textual stream, keeps a placeholder in the main stream so the decoder
can recover every digit position and run length, and then orders side-stream
records by context keys that the decoder can recompute from the main stream.
This design seeks two effects: a more regular main BWT component and smaller
digit-side components.

The design is closer to reversible preprocessing than to field extraction.
No dictionary of numbers is built, no token stream is serialized, and no
permutation is charged separately.  Instead, digit runs are grouped by simple
properties that are visible after normalization: their length and the bytes
around their boundary.  The side-stream representation is therefore an
agreement between encoder and decoder about how to enumerate and order digit
runs once the normalized main stream is known.

\subsection{Compression/decompression pipeline}

STC decomposes the input into a main component $m$ and thirteen side
components $s_0,\ldots,s_{12}$:
\[
  D(x) = (m, s_0, s_1, \ldots, s_{12}).
\]
The join operation $J$ is deterministic and exact:
\[
  J(m, s_0, s_1, \ldots, s_{12}) = x.
\]
The archive is then the concatenation of a small mode/length header, a BWT
component encoding of $m$, and BWT component encodings of the side streams.
The main component uses a text-heavy fixed profile and the side components use
a byte-generic fixed profile.

Decoding reverses these steps.  The decoder reads the declared component
lengths, decodes the main component and all side components, scans the main
stream to enumerate placeholder runs, and consumes side-stream records in the
same stable orders used by the encoder.  The correctness obligation is thus
local: every transformation before the component coder must have a
deterministic inverse that is computable from the decoded main stream and the
decoded side streams.

\begin{figure}[t]
\centering
\resizebox{\linewidth}{!}{\begin{tikzpicture}[
  font=\small,
  >=Latex,
  box/.style={draw, rounded corners=2pt, minimum width=2.4cm, minimum height=0.75cm, align=center},
  bluebox/.style={box, draw=blue!55!black, fill=blue!8},
  orangebox/.style={box, draw=orange!70!black, fill=orange!12},
  graybox/.style={box, draw=gray!70!black, fill=gray!10}
]
\node[graybox] (input) at (0,0) {input\\bytes};
\node[bluebox] (norm) at (3.0,0) {digit-run\\normalization};
\node[bluebox] (main) at (6.3,0.85) {main stream\\BWT component};
\node[orangebox] (side) at (6.3,-0.85) {digit side\\streams};
\node[graybox] (archive) at (9.5,0) {archive\\components};
\node[graybox, minimum width=1.7cm] (decode) at (12.1,0) {decode\\join};
\draw[->, gray!80!black, thick] (input) -- (norm);
\draw[->, blue!60!black, thick] (norm.east) -- (main.west);
\draw[->, orange!80!black, thick] (norm.east) -- (side.west);
\draw[->, blue!60!black, thick] (main.east) -- (archive.west);
\draw[->, orange!80!black, thick] (side.east) -- (archive.west);
\draw[->, gray!80!black, thick] (archive) -- (decode);
\node[font=\scriptsize, text=gray!80!black, align=center] at (6.7,-1.55)
  {side-stream order is recovered from decoder-visible main-stream context};
\end{tikzpicture}}
\caption{Overview of the STC compression and decompression pipeline.  Digit
runs are removed from the main stream before BWT and restored from
context-conditioned side streams during decompression.}
\label{fig:pipeline}
\end{figure}

\subsection{Digit-context decomposition}

\subsubsection{Digit-run normalization}

Let
\[
  \mathcal{D}=\{0x30,\ldots,0x39\}
\]
be the ASCII digit set, and let $p=0x39$ be the placeholder byte.  The main
stream $m$ is a copy of $x$ in which every digit byte is replaced by $p$.
Because every ASCII digit, including byte $0x39$, is removed from the main
stream and restored only from side streams, the placeholder byte is
unambiguous during recomposition.  A digit run is a maximal interval
$[a,e)$ such that $m_i=p$ for all $a \le i < e$.  Its run length is
$\ell=e-a$ and its bucket is $b=\min(\ell,12)$.

The placeholder choice does not assert that the original byte \texttt{9} is
special.  It is simply a byte that can stand for every digit after all original
digits have been removed from the main stream.  This normalization makes the
textual boundary of a number visible to the main component while avoiding a
separate mask: maximal placeholder runs are exactly the digit-run locations.
The side streams are then responsible only for the missing digit values.

\subsubsection{Bucket assignment and context keys}

For sorting and recomposition, define the decoder-visible normalized byte
function
\[
  N(i)=
  \begin{cases}
    -1, & i<0 \text{ or } i\ge n,\\
    m_i, & \text{otherwise}.
  \end{cases}
\]
The left context of length $t$ is
$L_t(a)=(N(a-1),N(a-2),\ldots,N(a-t))$.  The boundary marker $-1$ is used
only in ordering keys and is never serialized as a byte.  Table~\ref{tab:digit-buckets}
lists the bucket mapping, context key, and packing mode.

The bucket design separates two questions.  Run length determines how many
digits must be restored and which packing mode is plausible.  Local context
determines an order for records whose digit payloads may be more similar when
their textual neighborhoods are similar.  Short runs use more boundary
context because their payload is small and their surroundings carry useful
regularity; longer runs rely on coarser left context and raw payload bytes,
because the payload itself dominates the side-stream size.

\begin{table}[t]
\centering
\small
\begin{tabular}{clll}
\toprule
Run length & Side stream & Context key & Packing \\
\midrule
1 & $s_0$ & $L_3(a),N(e)$ & raw \\
2 & $s_1$ & $L_4(a),N(e),N(e+1)$ & pair \\
3 & $s_2$ & $N(a-1),N(e)$ & pair \\
4 & $s_3,s_4$ & special split order & int-be split \\
5 & $s_5$ & $L_4(a)$ & int-be \\
6 & $s_6$ & $L_4(a),N(e)$ & pair \\
7 & $s_7$ & $L_8(a)$ & raw \\
8 & $s_8$ & $L_6(a)$ & pair \\
9 & $s_9$ & $N(e),N(a-1)$ & raw \\
10 & $s_{10}$ & $L_6(a)$ & pair \\
11 & $s_{11}$ & $N(a-1)$ & raw \\
$\ge 12$ & $s_{12}$ & $L_6(a)$ & raw; length from $m$ \\
\bottomrule
\end{tabular}
\caption{Digit-run buckets and side-stream encodings.  Bucket 12 represents
every run with length at least 12 because $b=\min(\ell,12)$.}
\label{tab:digit-buckets}
\end{table}

\subsubsection{Stable ordering}

Stable ordering is the mechanism that avoids a serialized permutation.  During
encoding, digit runs in each bucket are sorted by the key in
Table~\ref{tab:digit-buckets}.  During decoding, the same digit-run positions
and the same keys are visible from $m$, so the decoder can reproduce the
ordering.  When two runs have the same key, stable sorting preserves their
left-to-right input order.  This rule is essential: it makes the side-stream
order deterministic without adding side information.

This is the same kind of constraint that makes a reversible compression
preprocessor acceptable for source-contained decompression.  If a run order
depends on data not available to the decoder, then the order itself must be
stored and counted.  STC avoids that cost by deriving all sort keys from the
normalized main stream, and by using stable ties so the natural scan order is
the implicit final tie-breaker.

Bucket 4 is the only split case.  Four-digit runs are packed into two bytes.
The first packed byte is assigned to $s_3$ after stable sorting by $N(a-1)$.
The second packed byte is assigned to $s_4$ after stable sorting the original
bucket-4 run records by the first packed byte, with original run order as the
stable tie-breaker.  Decoder-side assignment repeats the same two-stage rule.

\subsubsection{Digit packing}

Let the original ASCII digits be $c_j \in \{0x30,\ldots,0x39\}$ and
$d_j=c_j-0x30$.  STC uses three packing modes:

\begin{itemize}
  \item Raw writes each $d_j$ as one byte in $\{0,\ldots,9\}$.
  \item Pair writes an odd first digit as $100+d_0$ when $\ell$ is odd,
  followed by base-10 pairs $10d_i+d_{i+1}$.  Normal pair bytes must be in
  $0,\ldots,99$, and only the first byte of an odd-length packed run may be
  in $100,\ldots,109$.
  \item Int-be interprets the run as
  \[
    V=\sum_{j=0}^{\ell-1} d_j 10^{\ell-1-j}
  \]
  and writes $V$ big-endian in exactly $\lceil\ell/2\rceil$ bytes.  During
  unpacking, the value is rejected if $V \ge 10^\ell$; otherwise it is
  converted back to exactly $\ell$ decimal digits by left-padding with zeros.
\end{itemize}

The packing modes are intentionally simple.  Raw mode is robust for sparse or
long records.  Pair mode turns one or two decimal digits into a byte-sized
symbol while preserving an explicit odd-length marker.  Int-be mode is useful
when the run length is fixed by the bucket and the decimal value can be
represented compactly in a fixed number of bytes.  The decoder checks the
validity range of each packed representation, so malformed side streams are
not silently mapped to digit strings outside the intended run length.

\subsubsection{Recomposition and reversibility}

\begin{figure}[t]
\centering
\resizebox{0.95\linewidth}{!}{\begin{tikzpicture}[
  font=\scriptsize,
  >=Latex,
  token/.style={draw, rounded corners=2pt, minimum width=1.15cm, minimum height=0.45cm, align=center},
  texttok/.style={token, draw=blue!55!black, fill=blue!8},
  digittok/.style={token, draw=orange!75!black, fill=orange!12},
  runbox/.style={draw=orange!75!black, fill=orange!10, rounded corners=2pt, minimum width=2.2cm, minimum height=0.5cm, align=center}
]
\node[anchor=east] at (-0.3,3.0) {input};
\node[anchor=east] at (-0.3,2.0) {main};
\node[anchor=east] at (-0.3,1.0) {runs};
\node[anchor=east] at (-0.3,0.0) {join};

\node[texttok] (i1) at (0.4,3.0) {id=};
\node[digittok] (i2) at (1.8,3.0) {2047};
\node[texttok] (i3) at (3.2,3.0) {; y=};
\node[digittok] (i4) at (4.6,3.0) {19};
\node[texttok] (i5) at (6.0,3.0) {; ref};
\node[digittok] (i6) at (7.4,3.0) {305};
\node[texttok] (i7) at (8.8,3.0) {;};

\node[texttok] (m1) at (0.4,2.0) {id=};
\node[digittok] (m2) at (1.8,2.0) {\#\#\#\#};
\node[texttok] (m3) at (3.2,2.0) {; y=};
\node[digittok] (m4) at (4.6,2.0) {\#\#};
\node[texttok] (m5) at (6.0,2.0) {; ref};
\node[digittok] (m6) at (7.4,2.0) {\#\#\#};
\node[texttok] (m7) at (8.8,2.0) {;};

\node[runbox] (r1) at (1.8,1.0) {len 4 -> bucket 4};
\node[runbox] (r2) at (4.6,1.0) {len 2 -> bucket 2};
\node[runbox] (r3) at (7.4,1.0) {len 3 -> bucket 3};
\draw[->, orange!80!black, thick] (m2) -- (r1);
\draw[->, orange!80!black, thick] (m4) -- (r2);
\draw[->, orange!80!black, thick] (m6) -- (r3);

\node[texttok] at (0.4,0.0) {id=};
\node[digittok] at (1.8,0.0) {2047};
\node[texttok] at (3.2,0.0) {; y=};
\node[digittok] at (4.6,0.0) {19};
\node[texttok] at (6.0,0.0) {; ref};
\node[digittok] at (7.4,0.0) {305};
\node[texttok] at (8.8,0.0) {;};
\node[font=\scriptsize, text=gray!80!black, align=left] at (11.0,1.3)
  {stable context order\\no serialized permutation};
\end{tikzpicture}}
\caption{Example of reversible digit-context decomposition.  Digit runs are
replaced by placeholders in the main stream and encoded in side streams
selected by run length and local context.}
\label{fig:digit-example}
\end{figure}

\begin{proposition}
Given the main stream and all side streams produced by STC decomposition,
recomposition reconstructs the original byte string exactly.
\end{proposition}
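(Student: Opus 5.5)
I would prove the proposition by showing, step by step, that the decoder recomputes every quantity the encoder used, using only the decoded main stream $m$ and the side streams $s_0,\ldots,s_{12}$. Non-digit positions are handled first. By construction $m_i=x_i$ whenever $x_i\notin\mathcal{D}$. Every digit byte of $x$, including $0x39$, becomes $p=0x39$ in $m$, and no non-digit byte is mapped to $p$. Hence $\{i: m_i=p\}=\{i: x_i\in\mathcal{D}\}$, and the maximal placeholder runs $[a,e)$ in $m$ are exactly the maximal digit runs of $x$. So the decoder recovers the set of runs, their lengths $\ell$, and their buckets $b=\min(\ell,12)$. It remains to show that the digits placed into each run equal the original digits.

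Next I would establish order agreement for each bucket other than 4. The context keys in Table~\ref{tab:digit-buckets} are functions of $N$, and therefore of $m$ alone. The boundary value $-1$ is likewise determined by $n=|m|$. The encoder stably sorts the bucket's runs, listed in left-to-right order, by these keys. The decoder applies the same stable sort to the same list with the same keys, so it obtains the identical permutation, because a stable sort's output is a function only of the input sequence and the keys. Since the per-record width is known from $\ell$ for every packing mode (raw uses $\ell$ bytes, pair and int-be use $\lceil \ell/2\rceil$ bytes), the decoder can split $s_b$ sequentially into records. The $k$-th record is then assigned to the $k$-th run in sorted order. For bucket 12, $\ell$ varies, but it is read from $m$ for each run in sorted order, so the split is still determined.

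Then I would verify that each packing mode is injective and correctly inverted on its valid range. For raw this is immediate. For pair, the odd-length marker $100+d_0$ is present exactly when $\ell$ is odd, which the decoder knows, and each subsequent pair byte $10d_i+d_{i+1}\in\{0,\ldots,99\}$ is inverted by division with remainder. For int-be, $V<10^\ell\le 256^{\lceil \ell/2\rceil}$ because $100^{\lceil\ell/2\rceil}\ge 10^\ell$. So $V$ fits in the written bytes, and the zero-padded decimal expansion of length $\ell$ recovers $(d_0,\ldots,d_{\ell-1})$, including leading zeros. Adding $0x30$ back restores the original bytes $c_j$.

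The main obstacle is bucket 4, because the order of the second stream $s_4$ depends on payload data rather than on $m$ alone. I would handle it in two stages. First, the key $N(a-1)$ is visible from $m$, so the stable sort of bucket-4 runs is reproducible and the decoder reads $s_3$ in that order, assigning each run its first packed byte. Second, once every run has its first packed byte, the key for the second sort is known to the decoder. Stably sorting the runs by that key, with original run order as the tie-breaker, reproduces the encoder's order for $s_4$, which gives each run its second byte. The two bytes are the big-endian form of $V$ with $\ell=4$, and they are inverted as above. Combining all buckets fills every placeholder with its original digit, so $J(D(x))=x$.
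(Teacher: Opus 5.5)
Your proposal is correct and follows essentially the same route as the paper's proof: placeholder runs in $m$ coincide with the digit runs of $x$, all sort keys and stable orders are reproducible from $m$, bucket 4 is resolved in two stages using the already-recovered first packed byte, and the three packings are injective. You make explicit some details the paper leaves implicit, such as the per-record widths used to split each $s_b$ and the capacity bound $10^\ell \le 256^{\lceil \ell/2\rceil}$ for int-be.
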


\begin{proof}
Every digit position in the input is represented as a placeholder in $m$, and
every non-digit byte is copied unchanged, so rescanning $m$ recovers exactly
the digit-run positions and lengths.  Bucket assignment is a deterministic
function of run length.  For each bucket, the sorting key is a deterministic
function of $m$ and run position; stable sorting preserves the same tie order
used by the encoder.  The bucket-4 split is also deterministic because the
second order is a stable sort by the already assigned first packed byte.
Raw, pair, and int-be packing are injective under their rejection rules,
including fixed output digit length for int-be values with leading zeros.
Recomposition consumes exactly the bytes required for all runs and rejects
truncated streams or unused stream suffixes.  Thus each placeholder run is
filled with the unique digit sequence emitted by decomposition, while all
other bytes remain unchanged, yielding the original byte string.
\end{proof}

\subsubsection{Complexity and decoder-side determinism}

Digit-context decomposition is linear apart from the stable sorts within
buckets.  Scanning the input to form the normalized main stream and enumerate
digit runs takes $O(n)$ time.  Bucket assignment and packing are also linear in
the number of digit bytes.  The sorting cost is
\[
  \sum_b O(r_b \log r_b)
\]
for comparison sorting over $r_b$ runs in bucket $b$; for fixed-width byte
keys this can be replaced by stable counting or radix sorting without
changing the bitstream definition.  The side-stream payload size is linear in
the number of removed digit bytes.

Decoder-side determinism is the important property for reproducibility.  After
the component coder reconstructs $m$ and $s_0,\ldots,s_{12}$, the decoder
does not need the original input order as an auxiliary stream.  It rescans
$m$, recomputes each run length and context key, repeats the stable ordering
rules, unpacks the side-stream bytes, and fills the placeholder runs.  This
keeps the transform source-contained: all state needed to invert it is either
in the decoded main stream, in the decoded side streams, or in fixed decoder
code.

\subsection{Component BWT and count coding}

\subsubsection{BWT components}

For each nonempty component $z$, the component coder performs a no-sentinel
BWT and stores the primary index.  It then creates an $L$ stream of length
$|z|+1$ by inserting a sentinel symbol at the primary index.  The component
payload stores the original component size, the primary index, root frequency
information, and recursive split counts.  Empty components serialize as zero
component length and do not invoke the BWT coder.

The component abstraction is useful experimentally because it separates the
digit-context transform from the backend coder.  The main stream and each
side stream are encoded as ordinary byte components after decomposition; the
same component coder can therefore be used in the no-split control and in all
ablation variants.  Improvements in Table~\ref{tab:full-ablation} are not
obtained by changing entropy-coder families between rows.

\subsubsection{Root-frequency profiles}

The implementation uses two fixed root-frequency serialization profiles.  The
main stream uses the profile named \texttt{latin205}, selected for the
Latin/text-heavy normalized stream.  Digit-side streams use the
\texttt{generic} byte profile.  These names denote source-contained table
layouts and root-frequency coding order; they are not external compressors,
runtime training outputs, or archive-transmitted models.

The distinction between profiles is an engineering choice rather than a
learned model.  A normalized main stream still resembles text and markup after
digit replacement, while side streams contain small digit-derived alphabets or
packed byte values.  Fixed profiles allow the decoder to be compact and avoid
charging the archive for per-file training data.

\subsubsection{Recursive count coding}

The split-count predictor is a local M03-style count model.  Encoder and
decoder share a state
\[
  (T,L,R,S,\kappa),
\]
where $T$ is the current split total, $L$ and $R$ are remaining-left and
remaining-right constraints, $S$ is the number of remaining symbols, and
$\kappa$ is a small parser context used to select fixed table rows.  Small
totals use short pivot programs.  Larger totals first encode whether the count
is at a boundary and, if not, walk a deterministic bisection over the interior
range before coding any remaining bounded value uniformly.  Binary branches
use adaptive two-symbol counts selected by fixed state and scale tables.

This paper focuses on the digit-context decomposition and its measured effect.
The component coder is held fixed in the core ablation.  For count-coder
bitstream compatibility, the normative reproduction specification is the
released decoder source together with the fixed count tables shipped in the decoder package.

The count coder is therefore described here only to define the experimental
boundary.  A different backend coder might change the absolute archive bytes,
but the main ablation is designed to answer a narrower question: with this
backend held constant, how much does the reversible digit-context
decomposition contribute?

\subsubsection{Decoder boundary}

The scored decoder package contains the decoder source, fixed M03-style
tables, a README, and third-party notices.  It does not contain the encoder,
suffix-array construction code, external executable entries, process launchers,
or backend calls to bsc, bsc-m03, ZPAQ, context-mixing compressors, Brotli,
LZMA, Zstandard, bzip2, or raw fallback compressors for the reported archive.

\subsection{Bitstream details}
\label{sec:bitstream}

The outer archive stores a mode byte followed by fourteen compressed-component
lengths and then the fourteen component payloads.  Lengths are little-endian
base-128 varints: each byte stores seven payload bits, the high bit indicates
continuation, and values with shifts above 63 are rejected.  For a nonempty
component, the payload is:
\[
  \mathrm{varint}(|z|)\;||\;\mathrm{arith}(
  \mathrm{symbol\_size}=1,\ \mathrm{primary},\ \mathrm{root},\
  \mathrm{split\ counts}).
\]
The binary arithmetic coder uses a 32-bit range with standard E1/E2/E3
renormalization.  For cumulative counts $[C_\ell,C_h)$ and total $C$, with
old range endpoints $\ell_0,h_0$ and span $h_0-\ell_0+1$, it computes
\[
  h_1=\ell_0+\left\lfloor\frac{\mathrm{span}\,C_h}{C}\right\rfloor-1,
  \qquad
  \ell_1=\ell_0+\left\lfloor\frac{\mathrm{span}\,C_\ell}{C}\right\rfloor.
\]
Uniform values are serialized by repeated deterministic bisection while the
range size is at least $2^{16}$, followed by one direct interval update for
the remaining range.  The released decoder source is the normative
bitstream-compatible specification.

\section{Experiment and Results}

\subsection{Dataset}

The primary target is the Large Text Compression Benchmark setting on enwik9:
a 1,000,000,000-byte prefix of English Wikipedia XML text.  The reference
input SHA-256 used by the local verifier is:
\begin{verbatim}
159b85351e5f76e60cbe32e04c677847a9ecba3adc79addab6f4c6c7aa3744bc
\end{verbatim}
This paper reports one primary dataset rather than claiming broad corpus
coverage.  As a supplementary portability and baseline check, we also ran a
local corpus matrix on enwik8, Calgary, Canterbury, and Silesia.  These
additional corpora are not scored as LTCB results; they are used to expose
corpus-specific behavior and implementation robustness.  The prepared
supplementary corpus sizes were 100,000,000 bytes for enwik8, 3,251,493 bytes
across 18 Calgary files, 2,810,784 bytes across 11 Canterbury files, and
211,938,580 bytes across 12 Silesia files.

Using enwik9 as the primary dataset is appropriate for this study because it
is both large enough to make byte-level accounting meaningful and rich in the
kind of structured text that motivates the transform.  The supplementary
corpora are retained to show whether the implementation behaves sensibly
outside the primary target, but they do not replace the full-enwik9
same-coder ablation as the main evidence.

\subsection{Metrics}

The evaluation reports archive bytes, decoder/source-package bytes, local
LTCB-style total, full encode/decode time, peak resident working set, decoded
output size, SHA-256 correctness, and archive ledger equality.  Ledger
equality means that the parsed mode/header bytes and all declared compressed
component lengths sum exactly to the file-system archive byte count.
For the supplementary corpus matrix, the reported metrics are compressed bytes,
input bytes, encode time, decode/verify time, and SHA-256 roundtrip success for
each completed file.  Rows that fail to encode or lack a local executable are
kept as blocked rows rather than silently removed.

\subsection{Baselines and controls}

The main causal control is a same-coder ablation on full enwik9.  The same
internal component coder is run with no digit split, then with progressively
more structured digit side streams, and finally with full STC digit-context
decomposition.
This isolates the digit-context decomposition while keeping the component
coder fixed.

The bsc-m03 numbers are contextual only.  We distinguish the official LTCB
page bsc-m03 score, which is useful leaderboard context, from a local bsc-m03
0.5.5 compressed-only reproduction, which is not LTCB-equivalent unless
decoder-package accounting and an independent reproduction protocol are
bundled.

The supplementary corpus matrix uses the same local STC executable, local
bsc-m03 0.5.5 executable, and Python-backed gzip-9, bzip2-9, xz-9, brotli-11,
and zstd-19 implementations.  No PAQ or CMIX number is reported because no
paq8/paq8px or cmix executable was available in the local environment; this is
recorded as a blocked baseline, not as a measured result.

\subsection{Hardware and implementation}

The local verification used Windows with MinGW g++ 15.2.0, an Intel Core
i7-11700F CPU with 8 cores and 16 logical processors, and about 31.8 GiB
visible physical memory.  Full encode/decode profiling used the local
native executable built from the STC C/C++ implementation.

\subsection{Main enwik9 result}

Table~\ref{tab:stc-score} gives the local LTCB-style accounting for STC.  The
archive and decoder package are fixed outputs from the
verified run.  The table separates archive bytes from decoder/source-package
bytes because LTCB-style accounting charges both: an archive is only useful as
a valid local submission if an independently buildable decompressor accompanies
it.  Transport packaging used for delivery is not part of this score.

\begin{table}[t]
\centering
\small
\begin{tabular}{lrl}
\toprule
Item & Bytes & Scored \\
\midrule
Archive & 157,388,188 & yes \\
Decoder source package & 183,174 & yes \\
Local total & 157,571,362 & yes \\
\bottomrule
\end{tabular}
\caption{Local LTCB-style accounting for STC on enwik9.}
\label{tab:stc-score}
\end{table}

\subsection{Same-coder ablation}

Table~\ref{tab:full-ablation} is the core experimental result.  All rows use
the same internal component coder on full enwik9; the changed factor is the
digit-context decomposition.  The raw digit side stream already improves over
the no-split control.  Bucket packing and bucket ordering both provide larger
gains, and the full STC decomposition is best, reducing the archive by
2,629,561 bytes relative to the no-split control.

The rows isolate the role of each transform choice.  The no-split row is the
backend control: all bytes remain in one component and the component coder is
asked to handle digit variation inside the main stream.  The raw digit side
stream removes the digits but stores their values with little additional
structure; its 469,013-byte improvement shows that separation alone is already
useful.  Turning on bucket packing while leaving sorting off improves the
archive by 1,902,759 bytes, indicating that compact representation of digit
payloads is a major part of the gain.  Turning on bucket sorting while leaving
packing off improves by 2,410,872 bytes, showing that context-conditioned
ordering is also strong.  The full STC row combines both effects and is the
best measured variant.

\begin{table}[t]
\centering
\small
\begin{tabular}{lrr}
\toprule
Variant & Archive bytes & Delta vs no split \\
\midrule
Same coder, no digit split & 160,017,749 & 0 \\
Raw digit side stream & 159,548,736 & -469,013 \\
Bucket packing, sorting off & 158,114,990 & -1,902,759 \\
Bucket sorting, packing off & 157,606,877 & -2,410,872 \\
Full STC & 157,388,188 & -2,629,561 \\
\bottomrule
\end{tabular}
\caption{Full-enwik9 ablation with the same component coder.}
\label{tab:full-ablation}
\end{table}

\begin{figure}[t]
\centering
\resizebox{0.92\linewidth}{!}{\begin{tikzpicture}[font=\scriptsize, x=1cm, y=1cm]
\node[font=\small] at (5.1,4.05) {bytes saved vs same-coder no-split control};

\node[anchor=east] at (3.25,3.32) {Raw digit side stream};
\draw[fill=blue!70!black, draw=blue!70!black] (3.5,3.2) rectangle ++(1.60,0.28);
\node[anchor=west] at (5.18,3.34) {469,013};

\node[anchor=east] at (3.25,2.52) {Bucket packing, sorting off};
\draw[fill=blue!70!black, draw=blue!70!black] (3.5,2.4) rectangle ++(6.51,0.28);
\node[anchor=west] at (10.09,2.54) {1,902,759};

\node[anchor=east] at (3.25,1.72) {Bucket sorting, packing off};
\draw[fill=blue!70!black, draw=blue!70!black] (3.5,1.6) rectangle ++(8.25,0.28);
\node[anchor=west] at (11.83,1.74) {2,410,872};

\node[anchor=east] at (3.25,0.92) {STC digit-context v5};
\draw[fill=orange!80!black, draw=orange!80!black] (3.5,0.8) rectangle ++(9.00,0.28);
\node[anchor=west] at (12.58,0.94) {2,629,561};

\draw[gray!70!black] (3.5,0.35) -- (12.5,0.35);
\foreach \x/\lab in {3.5/0,6.92/1M,10.34/2M,12.39/2.6M} {
  \draw[gray!70!black] (\x,0.28) -- (\x,0.42);
  \node[font=\scriptsize, text=gray!80!black] at (\x,0.05) {\lab};
}
\end{tikzpicture}}
\caption{Full-enwik9 same-coder ablation.  Each row changes only the
digit-context decomposition while keeping the component coder fixed.}
\label{fig:ablation}
\end{figure}

\subsection{Component accounting}

The archive component accounting is:
\[
  157{,}388{,}188
  = 43 + 148{,}105{,}611 + 9{,}282{,}534.
\]
Table~\ref{tab:component-accounting} gives component-level raw and compressed
payload sizes.  Raw component sizes are pre-coding stream sizes, not a
partition of the input: the main stream has the original byte length after
digit replacement, while the side streams contain the removed digit payloads
required for reconstruction.

The component accounting also shows why the transform is not merely moving bytes from one
place to another.  The normalized main stream compresses to 148,105,611 bytes,
and all digit side streams together compress to 9,282,534 bytes.  The largest
side streams are the short-run buckets, especially bucket 2 and bucket 3,
which is consistent with structured text containing many small numeric
fragments.  Bucket 4 is split into two components because its two packed bytes
have different useful orderings.  Longer-run buckets are small in total but
remain necessary for exact reconstruction.

\begin{table}[t]
\centering
\scriptsize
\begin{tabular}{lrrp{0.34\linewidth}}
\toprule
Component & Raw bytes & Compressed bytes & Notes \\
\midrule
$m$ & 1,000,000,000 & 148,105,611 & Placeholder-normalized main stream \\
$s_0$ & 3,507,416 & 1,028,681 & Bucket 1 stream \\
$s_1$ & 4,761,161 & 2,727,656 & Bucket 2 stream \\
$s_2$ & 3,126,704 & 1,473,594 & Bucket 3 stream \\
$s_3$ & 2,648,434 & 232,291 & Bucket 4 first-byte stream \\
$s_4$ & 2,648,434 & 1,643,310 & Bucket 4 second-byte stream \\
$s_5$ & 793,695 & 426,826 & Bucket 5 stream \\
$s_6$ & 1,138,644 & 787,420 & Bucket 6 stream \\
$s_7$ & 173,768 & 68,389 & Bucket 7 stream \\
$s_8$ & 1,020,848 & 766,332 & Bucket 8 stream \\
$s_9$ & 39,213 & 14,817 & Bucket 9 stream \\
$s_{10}$ & 125,820 & 90,550 & Bucket 10 stream \\
$s_{11}$ & 5,148 & 1,857 & Bucket 11 stream \\
$s_{12}$ & 61,207 & 20,811 & Bucket 12, length $\ge 12$ \\
\midrule
Header & N/A & 43 & Mode and 14 compressed lengths \\
Total side & 20,050,492 & 9,282,534 & Sum of $s_0,\ldots,s_{12}$ \\
Archive & N/A & 157,388,188 & Equals file-system byte count \\
\bottomrule
\end{tabular}
\caption{Component-level archive accounting.}
\label{tab:component-accounting}
\end{table}

\subsection{Runtime and memory}

STC is a compression-ratio-first prototype, not a speed-oriented compressor.
Table~\ref{tab:runtime} gives the full-enwik9 local runtime profile.  The measurements
wrapper recorded the process identifier, command, stdout/stderr paths, peak
working set, and elapsed wall time.

These measurements characterize the prototype rather than establishing a
speed claim.  The implementation favors simple full-component processing and
large in-memory structures, so it uses roughly 12 GiB of peak memory and about
eleven minutes per direction on the local machine.  A production compressor
would need additional engineering for cache behavior, streaming, parallel
component processing, and bounded-memory suffix construction.

\begin{table}[t]
\centering
\small
\begin{tabular}{lrr}
\toprule
Run & Time & Peak RAM \\
\midrule
Full encode & 693.646 s & 12.491 GiB \\
Full decode & 669.796 s & 12.315 GiB \\
\bottomrule
\end{tabular}
\caption{Runtime and memory for local full-enwik9 profiling.}
\label{tab:runtime}
\end{table}

\subsection{Contextual bsc-m03 comparison}

Table~\ref{tab:bsc-context} provides context but is not the central claim.  The
local bsc-m03 run is not a rule-equivalent LTCB comparison because
decoder-package accounting differs.  The official LTCB-page value \citep{ltcb}
is cited as external context only; the STC result itself is the same-coder ablation and local score accounting
in Table~\ref{tab:stc-score}.

This distinction matters because compression benchmarks are sensitive to what
is counted.  The same-coder ablation in Table~\ref{tab:full-ablation} is the
paper's controlled evidence for the transform.  The bsc-m03 row only places
the local total in a familiar BWT-family neighborhood and should not be read
as an official leaderboard comparison.

\begin{table}[t]
\centering
\small
\begin{tabular}{lrl}
\toprule
Comparator & Bytes & Status \\
\midrule
STC local total & 157,571,362 & archive plus decoder package \\
Local bsc-m03 0.5.5 comparison & 160,364,392 & contextual, not rule-equivalent \\
Official LTCB-page bsc-m03 score & N/A & leaderboard context only \\
\bottomrule
\end{tabular}
\caption{Contextual bsc-m03 comparison.}
\label{tab:bsc-context}
\end{table}

\subsection{Supplementary corpus runs}

Supplementary corpus runs expose both portability signals and current
robustness failures; because two STC rows are partial, they are not used as
primary compression comparisons.  The full local matrix and blocked-row
accounting are treated as supplementary release material rather than as part of the main claim.

The supplementary results are therefore diagnostic.  They help identify where
the current encoder is mature enough for roundtrip experiments and where
parser or resource-handling work remains.  The main conclusion does not depend
on those rows.

\section{Discussion}

\subsection{Reproducibility and claim boundary}

The enwik9 result is a local, source-contained compression result rather than
an official leaderboard entry.  The relevant reproducibility requirements are
stable and paper-facing: the public input must match the reference enwik9 hash,
the compressed archive must decode to the exact 1,000,000,000-byte input, the
counted decoder source package must be buildable by an external reader, and the
reported local total must equal the archive bytes plus the decoder-source bytes.
Official benchmark status would require independent maintainer-side rebuild,
decode, hash verification, and score acceptance.

The implementation records more machine-readable evidence than is appropriate
for the manuscript itself.  Detailed release files, including exact file
hashes, decoder-source packaging, build commands, and verification logs, should
be published with the code release or dataset page.  The paper uses those data
to support the stated result, but it does not present internal run paths,
internal process files, or engineering version labels as part of the scientific
method.

\subsection{Scope of the evidence}

The central causal comparison is the same-coder ablation on full enwik9.  This
comparison holds the component coder fixed and changes only the digit-context
decomposition, making it the cleanest evidence for the contribution of STC.
The bsc-m03 and standard-codec measurements provide context, but they are not
leaderboard claims because decoder-package accounting, executable provenance,
and independent reproduction status differ across systems.

The current evidence is also corpus-specific.  Supplementary runs on enwik8,
Calgary, Canterbury, and Silesia are useful diagnostics, but two STC rows are
partial because the prototype encoder did not complete every file.  The method
therefore should not be described as a universal BWT theorem or a general
compression improvement across all data.  Broader text, markup, source-code,
log, binary, and adversarial datasets are needed before making that claim.

\subsection{Engineering limitations}

The current implementation is a compression-ratio-first research prototype.
Full enwik9 encode/decode uses roughly 12 GiB peak RAM and takes about eleven
minutes per direction on the local machine.  These numbers are acceptable for a
research prototype but not a final engineering claim.  Parser robustness,
malformed-archive fuzzing, memory reduction, and independent code review remain
necessary before STC should be treated as production compression software.

\section{Conclusions and Implications}

STC is presented as a new algorithm found by the authors through the
self-evolving AI system zeelin.  It shows that a reversible byte-level
preprocessing step can improve a BWT-family compressor on structured text when
the backend coder is held fixed.  The method separates digit payloads from the
textual neighborhoods that locate them, orders the side-stream records using
decoder-visible context, and restores the original file without a serialized
permutation.  On full enwik9, the
same-coder ablation attributes a 2,629,561-byte archive reduction to the full
STC transform relative to the no-split control.

The result should be interpreted with the same restraint used throughout the
paper.  It is a local compression result with source-contained decoding and
hash-based reconstruction checks, not an official leaderboard entry and not a
new theorem about BWT compressiveness.  The strongest evidence is the controlled
ablation on enwik9; supplementary corpus runs and external compressors provide
context but do not replace independent benchmark verification.

The next step is engineering and independent reproduction rather than changing
the claim.  A public release should provide the source code, decoder source
package, scored archive, verification hashes, and exact build/decode commands in
the repository or linked dataset page.  Broader corpus coverage, malformed-input
fuzzing, memory reduction, and independent code review are needed before STC can
be treated as a general-purpose compression system.

\section*{Data Availability Statement}

The experiments use enwik9, the public 1,000,000,000-byte
Wikipedia-derived corpus used by the Large Text Compression Benchmark.  The
paper reports SHA-256 hashes for the input, the compressed archive, and the
decoder source package so that readers can verify that they are using the same
release files.  The STC source code, decoder source package, build
instructions, paper source, figure sources, verification summaries, release
notes, and supplementary tables are provided through the
\href{https://github.com/thu-nmrc/STC-for-BWT-FamilyText-Compression}{THU-NMRC
project repository}.  Large generated release files, including the scored
archive, verification logs, hashes, and supplementary corpus outputs, will be
distributed as GitHub Release assets or through a linked Hugging Face Dataset
record rather than committed directly to Git.  The release README will map each
paper table to its source file or script, record the exact build and decode
commands, and state which files are required for independent reproduction.

\section*{Conflict of Interest Statement}

The authors report no conflict of interest for this arXiv version.

\bibliographystyle{plainnat}
\bibliography{references}

\end{document}